\scriptsize\color{Blue},  
\def\comp{\raise 1pt \hbox{$\scriptstyle\circ$}}
\def\upto{{\raise 1pt \hbox{$\scriptstyle \,\nearrow\,$}}}
\def\downto{{\raise 1pt \hbox{$\scriptstyle \,\searrow\,$}}}
\newcommand{\Halmos}{\ensuremath{\Box}}
\newtheorem{theorem}{Theorem}
\title{\Large\textbf{Secondary materials, Pigouvian taxes, and a monopsony}
}
\author[1]{Timo Kuosmanen}
\author[2,3\footnote{
Corresponding author. \newline \hspace*{5mm} 
\textit{E-mail addresses:} \texttt{timo.kuosmanen@utu.fi (T. Kuosmanen)}, \texttt{x.zhou@surrey.ac.uk (X. Zhou)}.}]{Xun Zhou}
\affil[1~]{Turku School of Economics, University of Turku, 20500 Turku, Finland}
\affil[2~]{{Surrey Business School, University of Surrey, Guildford GU2 7XH, UK}}
\affil[3~]{{Institute for Sustainability, University of Surrey, Guildford GU2 7XH, UK}}
\date{February 2025}
\begin{document}

\maketitle

\vfill
\begin{center}
Declarations of interest: none
\end{center}
\vfill

\begin{abstract}
\noindent 
Secondary materials present promising opportunities for firms to repurpose emissions into marketable goods, aligning with circular economy principles. This paper examines conditions under which introducing a market for secondary materials can completely replace Pigouvian emissions taxes. These conditions prove highly restrictive: positive Pigouvian emissions taxes remain necessary unless secondary materials prices immediately reach unrealistically high levels. We propose that the socially optimal budget-neutral policy is to subsidize secondary materials prices while taxing uncontrolled emissions. Further, we extend the analysis to a two-firm framework where a data center supplies residual heat to a district heating firm acting as a monopsony buyer. This extension explicitly models the demand for residual heat and explores how subsidies and emissions taxes align firm incentives with the social optimum in the absence of competitive markets. 
\\[5mm]
\textbf{Keywords}: Circular economy; Environmental policy; Industrial decarbonization; Marginal abatement cost function
\\[2mm]
\textbf{JEL Codes}: Q53; Q58; D21; D42
\end{abstract}
\vfill

\thispagestyle{empty}
\newpage
\setcounter{page}{1}
\setcounter{footnote}{0}
\pagenumbering{arabic}
\baselineskip 20pt
\setlength\bibitemsep{1.15\itemsep}

\section{Introduction}\label{sec:intro}
Secondary materials refer to materials that are not the primary outputs of industrial or manufacturing processes. These include byproducts, wastes, scraps, or used items that can be recovered, such as through reuse, repurposing, refurbishment, or recycling, depending on their quality \parencite{panchal_does_2021,kube_research_2018,rigamonti_step_2020}.\footnote{Other terms closely related to secondary materials include secondary raw materials, recoverable byproducts, recovered products, recyclable wastes, and commercial wastes, among others.} Secondary materials are becoming increasingly important across various industries, as demonstrated by a growing number of practical applications.\footnote{\textcite{kube_research_2018} provide a comprehensive review of trends in environmental and resource economics, noting that secondary materials were a prominent research focus during the early decades of the field (see, e.g., \cite{grace_secondary_1978,bingham_conditionally_1983,tsao_process_1971,anderson_tax_1977}). However, the interest in secondary materials has experienced a resurgence in recent years due to increasing concerns about the circular economy and sustainable resource management, advances in technologies for the recovery of secondary materials, and the emergence of new types of secondary materials (see, e.g., \cite{egger_resource_2024,broberg_burn_2022,curtis_when_2019,yamamoto_is_2022}).} One notable example involves the recovery of waste heat from data centers for district heating. Conventionally, district heating systems have relied on cogeneration plants burning fossil fuels \parencite{paiho_towards_2016,lund_role_2010}. Since electricity generation is increasingly based on decentralized renewable energy such as wind and solar power, replacing the cogeneration plants for district heating has proved a major challenge in Northern countries such as Finland. Utilizing waste heat from data centers offers a promising solution for zero-carbon district heating \parencite{wahlroos_future_2018,yuan_waste_2023}. For example, Google's data center in Hamina, Finland, recovers waste heat and is expected to cover approximately 80\% of the city's district heating needs.\footnote{Source: \href{https://www.fdca.fi/google-announced-e1-billion-expansion-of-hamina-data-center-and-new-heat-recovery-project/}{fdca.fi/google-announced-e1-billion-expansion-of-hamina-data-center-and-new-heat-recovery-project/} (retrieved on December 19, 2024).} Similarly, Finnish energy company Fortum and Microsoft have partnered to construct a data center region in Espoo and Kirkkonummi, Finland, which is expected to meet 40\% of the heating needs of the two cities.\footnote{Source: \href{https://www.fortum.com/data-centres-helsinki-region}{fortum.com/data-centres-helsinki-region} (retrieved on December 19, 2024).} 

These real-world examples illustrate how industries can simultaneously reduce their environmental impact and generate revenue by repurposing wastes, aligning with the principles of a circular economy. Building on this concept, Leary, Zunino, and Wagner (\citeyear{leary_marginal_2025}) (henceforth LZW) extend \citeauthor{mckitrick_derivation_1999}'s (\citeyear{mckitrick_derivation_1999}) analysis of a profit-maximizing firm subject to an emissions constraint by incorporating the recovery of secondary materials into the framework for estimating the marginal abatement cost (MAC) function. LZW's innovative approach accounts for the potential revenue generated from ``good" emissions (those converted into marketable products through abatement effort) while continuing to treat ``bad" emissions as negative externalities subject to regulations. LZW argue that, under certain conditions, secondary materials prices could replace environmental regulatory measures, such as Pigouvian emissions taxes.

While LZW's framework is an important step forward, they do not explicitly identify the conditions under which the secondary material price can fully replace the emissions tax. The first contribution of this paper is to fill this gap by deriving these conditions and assessing their feasibility. We show that the socially optimal emission target and the socially optimal emissions tax decrease as the secondary material price increases, but for the secondary material price to fully replace the emissions tax, it would need to reach unrealistically high levels immediately. Without such conditions, environmental policy instruments, such as emissions taxes, remain necessary to achieve socially optimal outcomes. 

The second contribution of our paper is to propose that a socially optimal budget-neutral policy combines subsidies for secondary materials prices with taxes on uncontrolled emissions. This policy dynamically adjusts subsidies and taxes to align firms' private incentives with social efficiency while maintaining budget neutrality and supporting circular economy principles.

In addition to the single-firm analysis, the third contribution of this paper is to extend to a two-firm framework where a data center supplies residual heat to a district heating firm, which operates as a local monopoly. By explicitly modeling the demand for residual heat, this extension addresses the absence of a competitive market and explores the roles of subsidies and emissions taxes in aligning the incentives of both firms with the social optimum. The proposed framework not only broadens the theoretical understanding of the supply and demand of secondary materials but also informs policy design in scenarios characterized by market imperfections and limited competition.

The rest of this paper proceeds as follows. Section \ref{sec:setup} introduces LZW's MAC model, which is refined and extended with formal proofs in Section \ref{sec:corr}. The extension to a two-firm model is presented in Section \ref{sec:two-firm}. Section \ref{sec:concl} concludes with policy implications and suggestions for future research.

\section{Profit-maximizing firm with secondary material revenue}\label{sec:setup}
A compelling example of revenue from secondary materials is methane capture from landfills. At the Altamont Landfill in California, harmful methane emissions are captured and converted into valuable liquefied natural gas (LNG) with global market demand.\footnote{Source: \href{https://altamontlandfill.wm.com/green-energy/index.jsp}{altamontlandfill.wm.com/green-energy/index.jsp} (retrieved on December 19, 2024).} This example illustrates how firms can generate revenue by recovering secondary materials.

Following LZW, our starting point is \citeauthor{mckitrick_derivation_1999}'s (\citeyear{mckitrick_derivation_1999}) analysis of a price-taking firm that maximizes profit subject to a regulatory constraint for emissions:
\begin{align}\label{eq:mckitrick}
    \max\; \pi(y, a) &= p_y y - C(w, y, a)\\\nonumber
    s.t. \qquad e & = e(y, a) \leq \lambda, 
\end{align}
where $y$ denotes the firm's output sold at the competitive price $p_y$, cost $C$ is a function of input price $w$, output $y$, and abatement effort $a$, and emissions $e$ are modeled as a function that increases with output $y$ but decreases with abatement effort $a$. Note that $a$ represents units of abatement effort (e.g., deploying workers or equipment for abatement tasks) rather than units of emissions abated. Finally, $\lambda$ is a command-and-control constraint imposed on emissions $e$.

\citeauthor{mckitrick_derivation_1999} shows that in the absence of regulation (i.e., $\lambda \to \infty$), the firm's privately optimal solution to \eqref{eq:mckitrick} is to produce positive levels of output $y$ and emissions $e$ while allocating no effort to abatement ($a = 0$). However, under strict regulatory constraints (i.e., low values of $\lambda$), the firm optimizes its production and abatement levels such that both $y$ and $a$ are positive.

LZW extend \citeauthor{mckitrick_derivation_1999}'s model to take into account the secondary material revenue opportunity for emissions traditionally considered to have no practical value or application beyond their release. They split \citeauthor{mckitrick_derivation_1999}'s emissions $e$ into two components as $e = e_g + e_b$. Bad emissions $e_b$ refer to emissions released untreated into the environment, consistent with McKitrick's original concept of emissions, which are subject to regulations such as command-and-control standards, emission taxes, or tradeable permits. In contrast, the firm's abatement effort $a$ converts at least part of the emissions $e_g = e - e_b$ into a marketable good (good emissions) for sale at price $p_{e_g}$.

LZW reformulate the firm's profit maximization in \eqref{eq:mckitrick} as:
\begin{equation}\label{eq:mckitrick-lzw}
    \max\; \pi(y, a) = p_y y + p_{e_g}[e(y) - e_b(y, a)] - C(w, y, a).
\end{equation}
Here, the firm generates revenue from both the sales of its primary output $y$ at price $p_y$ and the sales of good emissions $e_g = e(y) - e_b(y, a)$ as a secondary material at price $p_{e_g}$, while $C$ incorporates the additional costs associated with recovering the secondary material. LZW note that increasing $y$ raises overall emissions $e$, but abatement effort $a$ reduces $e_b$ and increases the marketable $e_g$. In some cases, while overall emissions $e$ may increase with $y$, good emissions $e_g$ can grow faster than bad emissions $e_b$.

LZW derive the first-order conditions for optimizing problem \eqref{eq:mckitrick-lzw}:
\begin{align}
    \frac{\partial \pi}{\partial y} &= p_y + p_{e_g} \left[\frac{\partial e}{\partial y} - \frac{\partial e_b}{\partial y}\right] - \frac{\partial C}{\partial y} = 0, \label{eq:foc1}\\[.5em]
    \frac{\partial \pi}{\partial a} &= -p_{e_g} \frac{\partial e_b}{\partial a} - \frac{\partial C}{\partial a} = 0. \label{eq:foc2}
\end{align}
Eq. \eqref{eq:foc1} suggests that a profit-maximizing firm chooses output $y$ such that the full marginal revenue (from both primary output and secondary material sales) equals the full marginal cost of production. Eq. \eqref{eq:foc2} suggests that the firm allocates abatement effort $a$ such that the marginal revenue of $a$, i.e., $-p_{e_g} \frac{\partial e_b}{\partial a}$, equals the marginal cost of $a$. LZW note that the marginal revenue of $a$ is positive because $\frac{\partial e_b}{\partial a} < 0$ as increased abatement effort reduces the portion of bad emissions $e_b$, and under the assumption of diminishing returns to abatement effort, the marginal revenue function of $a$ is downward-sloping.

To derive the MAC function when there is a {price} for good emissions, LZW reformulate Eq. \eqref{eq:mckitrick-lzw} as:
\begin{equation}\label{eq:mckitrick-lzw2}
    \max\; \pi(y, a) = p_y y + p_{e_g}[e - e_b] - C(w, y, a(e_b,y)).
\end{equation}
Differentiating Eq. \eqref{eq:mckitrick-lzw2} with respect to $e_b$, LZW obtain the firm's MAC function:
\begin{equation}\label{eq:mac-s}
    \text{MAC}_S = - \frac{\partial C}{\partial a} \frac{\partial a}{\partial e_b} - p_{e_g},
\end{equation}
where $-\frac{\partial C}{\partial a} \frac{\partial a}{\partial e_b}>0$ represents the MAC without secondary material sales (denoted as MAC$_M$ for \citeauthor{mckitrick_derivation_1999} in LZW). LZW note that $\text{MAC}_S$ is positive and downward-sloping, analogous to MAC$_M$. $\text{MAC}_S$ reflects the total opportunity cost of an additional unit of bad emissions $e_b$, which includes both the avoided abatement cost in $\text{MAC}_M$ and the forgone secondary material revenue from one unit of good emissions $e_g$. As a result, the firm has an incentive to abate $e_b$ by converting them into $e_g$ until $\text{MAC}_S = 0$. LZW emphasize that even in the absence of other incentives such as emissions taxes, the secondary material price $p_{e_g}$ alone can reduce the firm's uncontrolled emissions in a Pareto-improving and self-enforcing manner.

\section{Socially optimal policy with secondary material}\label{sec:corr}
In light of LZW's analysis, it might be tempting to view the establishment of secondary materials prices as a viable substitute for government interventions such as emissions taxes. However, it is important to note that the socially efficient emission target is dependent on the price $p_{e_g}$ of the secondary material associated with these emissions. We can formally prove the following:
\begin{theorem}\label{theo1}
    The socially efficient emission target $e_s^*$ is a decreasing function of {the price $p_{e_g}$ of the secondary material}. 
\end{theorem}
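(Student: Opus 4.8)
The plan is to characterize the socially efficient emission target $e_s^*$ as the maximizer of a social welfare function and then read off its comparative static in $p_{e_g}$ using the implicit function theorem. First I would introduce the social planner's objective, which augments the firm's profit in Eq.~\eqref{eq:mckitrick-lzw2} with the external damage caused by the untreated bad emissions $e_b$. Writing $D(e_b)$ for the social damage function, with an increasing marginal damage $\text{MD}(e_b) = D'(e_b) > 0$, the social welfare is $W = \pi(y,a) - D(e_b)$. Differentiating $W$ with respect to $e_b$ and substituting the firm's marginal profit already computed in deriving Eq.~\eqref{eq:mac-s}, so that $\partial \pi / \partial e_b = \text{MAC}_S$, the interior first-order condition for the socially optimal emission level collapses to the familiar balance between marginal abatement cost and marginal damage,
\begin{equation*}
    \text{MAC}_S(e_b) = \text{MD}(e_b), \qquad \text{equivalently} \qquad \text{MAC}_M(e_b) - p_{e_g} = \text{MD}(e_b),
\end{equation*}
using the decomposition $\text{MAC}_S = \text{MAC}_M - p_{e_g}$. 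I would denote the root of this equation by $e_s^*$ and note that it inherits a dependence on $p_{e_g}$.

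Second, I would treat the first-order condition as an implicit equation $G(e_b, p_{e_g}) := \text{MAC}_M(e_b) - p_{e_g} - \text{MD}(e_b) = 0$ defining $e_s^*(p_{e_g})$, and apply the implicit function theorem to obtain
\begin{equation*}
    \frac{d e_s^*}{d p_{e_g}} = -\frac{\partial G/\partial p_{e_g}}{\partial G/\partial e_b} = \frac{1}{\text{MAC}_M'(e_s^*) - \text{MD}'(e_s^*)}.
\end{equation*}
The sign of the comparative static is then governed entirely by the denominator. Finally, I would invoke the structural properties already granted in the excerpt: $\text{MAC}_S$ (equivalently $\text{MAC}_M$) is downward-sloping, so $\text{MAC}_M'(e_s^*) < 0$, while the marginal damage is increasing, so $\text{MD}'(e_s^*) > 0$. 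Hence the denominator is strictly negative, which yields $d e_s^* / d p_{e_g} < 0$ and proves the claim.

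I expect the main obstacle to lie not in the sign algebra but in the rigorous justification of the implicit-function step. One must guarantee that an interior optimum exists and is unique and that the relevant second-order condition holds, namely that $W$ is strictly concave in $e_b$ at the optimum so that the denominator $\text{MAC}_M'(e_s^*) - \text{MD}'(e_s^*)$ is nonzero (indeed strictly negative). This in turn requires making explicit the convexity and monotonicity hypotheses on $C$, on the implicit map $a(e_b,y)$, and on $D$ that render $\text{MAC}_S$ decreasing and $\text{MD}$ increasing. A secondary subtlety is that output $y$ is itself a choice variable co-varying with $p_{e_g}$; to keep the argument clean I would either restrict attention to the $e_b$ margin at the jointly optimal $y$ or impose a separability condition ensuring the induced cross-effect through $y$ does not overturn the sign. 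Once these assumptions are stated, the comparative static follows immediately from the two displayed identities.
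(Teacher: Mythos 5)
Your proof is correct and follows essentially the same route as the paper's: both characterize $e_s^*$ by the condition $\text{MD}(e_s^*) = \text{MAC}_M(e_s^*) - p_{e_g}$, differentiate it implicitly in $p_{e_g}$, and sign the resulting expression using $\text{MD}' > 0$ and $\text{MAC}_M' < 0$ (your $1/(\text{MAC}_M' - \text{MD}')$ is algebraically identical to the paper's $-1/(\text{MD}' - \text{MAC}_M')$). Your additional remarks on grounding the first-order condition in an explicit welfare maximization and on the regularity conditions needed for the implicit-function step go beyond what the paper spells out, but they do not change the argument.
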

\begin{proof}
    See Appendix \ref{proof:theo1}.
\end{proof}
Theorem \ref{theo1} highlights the fact that the emission target is not fixed: as the price of the secondary material increases, it is beneficial to set more ambitious targets. LZW suggest that the secondary material price can completely replace the tax under certain conditions. However, they do not state those conditions explicitly. Building on Theorem \ref{theo1}, the following theorem shows that the conditions are rather restrictive.
\begin{theorem}\label{theo2}
    Establishing a price for the secondary material can completely replace the emissions tax if and only if $p_{e_g}$ is high enough to bring the firm's privately optimal choice of bad emissions $\hat e_b^0$ to zero.
\end{theorem}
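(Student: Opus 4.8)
The plan is to derive the social optimum from a marginal-damage condition, compare it against the firm's unregulated choice under a pure secondary-material price, and show the two can coincide only at the corner where bad emissions vanish. I would set up the social planner's problem as maximizing the firm's value $\pi$ in \eqref{eq:mckitrick-lzw2} net of the environmental damage $D(e_b)$ caused by bad emissions, where $D$ is increasing and convex with strictly positive marginal damage, i.e.\ $D'(e_b)>0$ for every $e_b>0$. Since \eqref{eq:mac-s} gives $\partial\pi/\partial e_b=\text{MAC}_S$, the planner's interior optimum $e_s^*$ satisfies $\text{MAC}_S(e_s^*)=D'(e_s^*)$, and the Pigouvian tax decentralizing it is $t^*=D'(e_s^*)$; by contrast, a firm facing only $p_{e_g}$ and no tax abates until $\text{MAC}_S=0$, which defines $\hat e_b^0$, while under a tax $t$ its condition becomes $\text{MAC}_S=t$. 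Because $\text{MAC}_S$ is downward-sloping in $e_b$ (as noted after \eqref{eq:mac-s}) and $D'\ge 0$, comparing $\text{MAC}_S=D'$ with $\text{MAC}_S=0$ immediately yields $e_s^*\le\hat e_b^0$: the planner never wants more bad emissions than the firm chooses on its own.

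The crux is then to establish the biconditional by reading ``the price replaces the tax'' as ``the unregulated choice already implements the social optimum,'' i.e.\ $\hat e_b^0=e_s^*$, equivalently $t^*=0$. For the ``only if'' direction, suppose $\hat e_b^0>0$; then $\text{MAC}_S(\hat e_b^0)=0$, so $e_s^*=\hat e_b^0$ would force $D'(\hat e_b^0)=0$ through the planner's condition, contradicting $D'>0$ on $(0,\infty)$. Hence $\hat e_b^0>0$ leaves a strictly positive wedge $e_s^*<\hat e_b^0$ that only a positive tax $t^*=D'(e_s^*)>0$ can close, and full replacement fails. For the ``if'' direction, when $p_{e_g}$ is large enough that the corner $\hat e_b^0=0$ obtains (i.e.\ $\text{MAC}_S(0)\le 0$), the social marginal objective $\partial(\pi-D)/\partial e_b=\text{MAC}_S-D'$ is also $\le 0$ at $e_b=0$, so the planner's optimum is the same corner $e_s^*=0=\hat e_b^0$ and no tax is needed. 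Theorem \ref{theo1} supplies the monotonicity of $e_s^*$ in $p_{e_g}$, which, together with the analogous monotonicity of $\hat e_b^0$, guarantees a well-defined threshold price above which this corner --- and hence full replacement --- is reached.

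I expect the main obstacle to be the careful handling of the corner solution at $e_b=0$. Away from the corner the argument is a clean one-dimensional comparison of two first-order conditions via the monotonicity of $\text{MAC}_S$, but at $e_b=0$ the planner's optimality condition degenerates to the inequality $\text{MAC}_S(0)\le D'(0)$ rather than an equality, and I must argue that the firm's private corner choice still coincides with the social optimum and that imposing any nonnegative tax --- including $D'(0)$, which may be strictly positive --- leaves the zero-emissions outcome unchanged. A secondary point needing precision is the damage specification: the ``only if'' direction relies essentially on $D'$ being strictly positive throughout $(0,\infty)$, so that a genuine externality cannot be made to vanish at any positive emission level, and this assumption should be stated explicitly at the outset.
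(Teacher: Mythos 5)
Your proposal is correct and takes essentially the same route as the paper's proof: both compare the firm's no-tax condition $\text{MAC}_S(\hat e_b^0)=0$ (equivalently $\text{MAC}_M(\hat e_b^0)=p_{e_g}$) with the planner's condition $\text{MAC}_S(e_s^*)=\text{MD}(e_s^*)$, and use monotonicity of $\text{MAC}_M$ and MD to conclude that the two choices coincide only at $\hat e_b^0=e_s^*=0$, with $\tau^*=\text{MD}(e_s^*)>0$ required otherwise. If anything, your handling of the corner is more careful than the paper's: you state the inequality first-order conditions at $e_b=0$ and the strict positivity of $D'$ on $(0,\infty)$ explicitly, whereas the paper works with equalities throughout and implicitly assumes $\text{MD}(0)=0$ when it concludes $\hat e_b^0=e_s^*=0$ at the threshold price.
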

\begin{proof}
    See Appendix \ref{proof:theo2}.
\end{proof}
While Theorem \ref{theo2} demonstrates the theoretical possibility of replacing the emissions tax with a price for the secondary material, such conditions are rarely feasible in practice. This emphasizes the need for complementary tools such as an emissions tax. The following theorem explores the relationship between the secondary material price and the socially optimal emissions tax:
\begin{theorem}\label{theo3}
    The socially optimal emissions tax $\tau^*$ is a decreasing function of {the price $p_{e_g}$ of the secondary material}.  
\end{theorem}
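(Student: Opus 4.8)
The plan is to characterize the socially optimal tax as a Pigouvian tax equal to the marginal environmental damage evaluated at the socially efficient target, and then to differentiate that characterization using Theorem~\ref{theo1}. Write $D(e_b)$ for the environmental damage caused by bad emissions, assumed increasing and strictly convex ($D'>0$, $D''>0$), and take social welfare to be the firm's gross profit in \eqref{eq:mckitrick-lzw2} net of this damage. First I would set up the planner's problem $\max_{y,a}\;\pi(y,a)-D(e_b)$ and take its first-order condition in $e_b$. Since the differentiation of \eqref{eq:mckitrick-lzw2} that yields \eqref{eq:mac-s} gives $\partial\pi/\partial e_b=\text{MAC}_S$, the socially efficient target $e_s^*$ is characterized by $\text{MAC}_S(e_s^*)=D'(e_s^*)$, i.e. marginal abatement cost equals marginal damage.

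Next I would analyze the firm's problem when a per-unit tax $\tau$ is levied on bad emissions, $\max_{y,a}\;\pi(y,a)-\tau e_b$, whose first-order condition in $e_b$ reads $\text{MAC}_S(\hat e_b)=\tau$. Because $\text{MAC}_S$ is positive and downward-sloping (as noted after \eqref{eq:mac-s}), this equation has a unique root $\hat e_b(\tau)$ that decreases in $\tau$. Requiring the firm's privately optimal choice to coincide with the social optimum, $\hat e_b(\tau^*)=e_s^*$, then pins down the socially optimal tax as $\tau^*=\text{MAC}_S(e_s^*)=D'(e_s^*)$; that is, $\tau^*$ equals the marginal damage at the optimal target. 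Finally I would treat $e_s^*$ as a function of $p_{e_g}$ and differentiate the identity $\tau^*(p_{e_g})=D'\big(e_s^*(p_{e_g})\big)$ to obtain $\mathrm{d}\tau^*/\mathrm{d}p_{e_g}=D''(e_s^*)\,\mathrm{d}e_s^*/\mathrm{d}p_{e_g}$. Theorem~\ref{theo1} supplies $\mathrm{d}e_s^*/\mathrm{d}p_{e_g}<0$, and strict convexity gives $D''(e_s^*)>0$, so the product is negative and the claim follows.

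The main obstacle is establishing the Pigouvian identity $\tau^*=D'(e_s^*)$ rigorously rather than by analogy. This requires verifying that a single tax instrument placed on $e_b$ actually decentralizes the \emph{full} social optimum over both $y$ and $a$: because the externality enters welfare only through $e_b$, a tax equal to marginal damage should leave the firm's remaining margins undistorted, but this must be checked against the joint first-order conditions \eqref{eq:foc1}--\eqref{eq:foc2} rather than assumed. I would also need to confirm that an interior optimum exists and that the relevant second-order and monotonicity conditions hold (strict convexity of $D$ together with strict monotonicity of $\text{MAC}_S$ in $e_b$), since these are exactly what guarantee that $e_s^*(p_{e_g})$ is well-defined and differentiable so that the comparative static inherited from Theorem~\ref{theo1} is legitimate.
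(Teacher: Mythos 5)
Your proposal is correct and takes essentially the same approach as the paper: your Pigouvian identity $\tau^*=D'(e_s^*)$ is exactly the paper's Eq.~\eqref{eq:optimal-tax}, $\tau^*=\text{MD}(e_s^*)$, established in its proof of Theorem~\ref{theo2}, and your final chain-rule step $\mathrm{d}\tau^*/\mathrm{d}p_{e_g}=D''(e_s^*)\,\mathrm{d}e_s^*/\mathrm{d}p_{e_g}<0$ is precisely the paper's Eq.~\eqref{eq:diff2} combined with Theorem~\ref{theo1} and the assumption that marginal damage is increasing. The only difference is that you re-derive the Pigouvian identity from the planner's and taxed firm's first-order conditions, whereas the paper imports it directly from the proof of Theorem~\ref{theo2}.
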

\begin{proof}
    See Appendix \ref{proof:theo3}.
\end{proof}
Theorem \ref{theo3} implies that, although achieving full substitution is often unrealistic, a price for the secondary material can partially substitute for the emissions tax. This reveals a policy synergy: an increase in the price of the secondary material not only encourages emission reduction but also reduces the reliance on the emissions tax. The synergy helps to alleviate the financial burden on firms and improve the political and economic feasibility of emission reduction policies.

Finally, if the main objective of the environmental tax is to bring emissions to an efficient level rather than collect tax revenue for the government, then it is possible to simultaneously subsidize the price of the secondary material and tax harmful emissions.
\begin{theorem}\label{theo4}
    The socially optimal budget-neutral policy is to subsidize {the price $p_{e_g}$ of the secondary material} while taxing bad emissions, such that the ratio of the socially optimal emissions tax $\tau^*$ to the socially optimal subsidy $\sigma^*$ equals the ratio of emissions abated $(e_b - e_s^*)$ to the socially efficient emission target $e_s^*$, where $e_b$ denotes the firm's initial emission level before the implementation of the subsidy and emissions tax:
    \[ \frac{\tau^*}{\sigma^*} = \frac{e_b - e_s^*}{e_s^*}. \]
\end{theorem}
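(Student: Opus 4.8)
The plan is to fold both policy instruments into the firm's problem, characterize the family of $(\sigma,\tau)$ pairs that implement the socially efficient target $e_s^*$, and then single out the budget-neutral member of that family. First I would return to problem \eqref{eq:mckitrick-lzw2} and observe that a per-unit subsidy $\sigma$ on the secondary material raises the effective price the firm receives for good emissions from $p_{e_g}$ to $p_{e_g}+\sigma$, while an emissions tax $\tau$ attaches a marginal cost $\tau$ to each unit of bad emissions $e_b$. Re-running the differentiation that produced \eqref{eq:mac-s}, the firm now abates until its subsidy-adjusted marginal abatement cost equals the tax, i.e. until $-\frac{\partial C}{\partial a}\frac{\partial a}{\partial e_b}-(p_{e_g}+\sigma)=\tau$, equivalently $\text{MAC}_M(\hat e_b)=p_{e_g}+\sigma+\tau$, where $\text{MAC}_M=-\frac{\partial C}{\partial a}\frac{\partial a}{\partial e_b}$ is strictly decreasing in $e_b$.

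Next I would impose social optimality. By Theorem \ref{theo3}, the pure-tax Pigouvian instrument $\tau^{\mathrm{Pig}}$ implements $e_s^*$ exactly when $\text{MAC}_M(e_s^*)=p_{e_g}+\tau^{\mathrm{Pig}}$. Requiring the combined policy to hit the same target, $\hat e_b=e_s^*$, and comparing the two first-order conditions forces $\sigma+\tau=\tau^{\mathrm{Pig}}$: the subsidy and the tax must jointly supply the entire Pigouvian wedge. This shows that a whole one-parameter family of policies implements $e_s^*$, and in particular guarantees that a budget-neutral member of this family exists, so the ratio I derive below is well defined.

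I would then impose budget neutrality to pin down the ratio. At the implemented optimum the firm's residual bad emissions equal $e_s^*$, so the government's tax receipts are $\tau\,e_s^*$; the abatement induced relative to the pre-policy baseline $e_b$ converts exactly $e_b-e_s^*$ units into good emissions eligible for the subsidy, so the subsidy outlay is $\sigma\,(e_b-e_s^*)$. Equating receipts and outlays, $\tau^*\,e_s^*=\sigma^*\,(e_b-e_s^*)$, and rearranging yields $\frac{\tau^*}{\sigma^*}=\frac{e_b-e_s^*}{e_s^*}$, as claimed. Notice that the ratio follows from budget neutrality alone, while the optimality condition $\sigma^*+\tau^*=\tau^{\mathrm{Pig}}$ serves only to fix the individual levels.

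The step I expect to be the main obstacle is the accounting in the third paragraph: one must verify that the tax base is precisely the residual bad emissions $e_s^*$ and that the subsidy base is precisely the abated quantity $e_b-e_s^*$, i.e. that every unit abated below the baseline $e_b$ reappears as a subsidized unit of good emissions rather than being lost as total emissions drift with output. Because the subsidy enters the output margin through $e(y)$ as well as the abatement margin through $e_b$ in \eqref{eq:mckitrick-lzw2}, I would need to confirm that the clean split between $e_s^*$ and $e_b-e_s^*$ survives the firm's joint choice of $y$ and $a$; once that split is established, the two budget terms line up into the stated ratio and the remaining firm-optimization and social-optimality steps are mechanical given the $\text{MAC}$ machinery of \eqref{eq:mac-s}.
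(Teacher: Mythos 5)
Your proof is correct and reaches the stated ratio through the identical budget-neutrality identity \eqref{eq:budget-neutral}, but your treatment of the optimality step differs genuinely from the paper's. The paper lets the subsidy shift the target itself: its proof defines $e_s^*$ by $\text{MD}(e_s^*)=\text{MAC}_M(e_s^*)-(p_{e_g}+\sigma^*)$ (the Theorem~\ref{theo1} logic applied to the subsidized effective price) and then sets $\tau^*=\text{MD}(e_s^*)$; consequently both $e_s^*$ and $\tau^*$ are functions of $\sigma^*$, and budget neutrality \eqref{eq:budget-neutral} becomes an implicit fixed-point condition in $\sigma^*$ whose solvability the paper never verifies. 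You instead freeze the target at its pure-tax Pigouvian level, characterize the entire implementing family by $\sigma+\tau=\tau^{\mathrm{Pig}}$ --- your firm first-order condition $\text{MAC}_M(\hat e_b)=p_{e_g}+\sigma+\tau$ agrees with the one implicit in the paper --- and let budget neutrality select the unique member, which yields explicit levels $\sigma^*=\tau^{\mathrm{Pig}}e_s^*/e_b$ and $\tau^*=\tau^{\mathrm{Pig}}(e_b-e_s^*)/e_b$. This buys existence and uniqueness of the budget-neutral optimum for free, and it is arguably the cleaner welfare reading (a subsidy financed by the tax is a pure transfer, so the genuine social optimum should not move with $\sigma$); the cost is that your $e_s^*$ is not the same object as the paper's: within the paper's framework the efficient target under the subsidy lies strictly below the pure-Pigouvian target, so the two proofs implement different emission levels, even though the stated ratio holds verbatim under both readings. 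Two smaller points: the condition $\text{MAC}_M(e_s^*)=p_{e_g}+\tau^{\mathrm{Pig}}$ that you attribute to Theorem~\ref{theo3} in fact comes from the proof of Theorem~\ref{theo2} (Eqs. \eqref{eq:equality3} and \eqref{eq:optimal-tax}); and the accounting worry you flag at the end --- whether the subsidy base is exactly the abated quantity $e_b-e_s^*$ once the output margin $y$ adjusts --- is genuine, but it is equally assumed away by the paper, which simply posits \eqref{eq:budget-neutral} with those tax and subsidy bases and does not address output-induced drift in total emissions.
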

\begin{proof}
    See Appendix \ref{proof:theo4}.
\end{proof}
Theorem \ref{theo4} builds on the insights from the previous theorems, integrating the relationships between the secondary material price, fiscal instruments, and socially efficient emission target into a unified framework. It reveals a dynamic relationship between the socially optimal subsidy $\sigma^*$ and the socially optimal emissions tax $\tau^*$: for a given $p_{e_g}$, as $\sigma^*$ increases, $\tau^*$ decreases, and vice versa. 

Furthermore, Theorem \ref{theo4} establishes a proportional relationship between $\sigma^*$ and $\tau^*$, linking their ratio to the ratio of emissions abated to the socially efficient emission target. This proportionality ties the fiscal instruments directly to environmental outcomes, ensuring that the subsidy and tax are dynamically adjusted based on the scale of required emission abatement.

Taken together, the dynamic and proportional relationships align the secondary material price with a carefully calibrated combination of subsidy and tax, offering a robust and flexible approach to achieving the social optimum while maintaining budget neutrality. 

\section{Monopsony buyer of secondary material}\label{sec:two-firm}
The single-firm model, as refined in the previous section, applies well to cases where competitive markets exist for secondary materials, such as the California waste facility producing LNG. In the case of residual heat from data centers, however, the buyer (typically a district heat supplier) operates as a natural monopoly. Similar to electricity and gas distribution networks, it is prohibitively expensive to build competing heating networks in the same area. In this section, we extend the single-firm model to a two-firm framework that explicitly captures the interaction between a waste heat supplier (e.g., a data center) and a monopsony buyer \parencite{robinson_economics_1969} (e.g., a district heating firm).

\subsection{Data center's problem}
Analogous to the previous sections, we assume that the data center produces primary output $y$ and converts a portion of its waste heat $e_g = e(y) - e_b(y,a)$ to supply the district heating firm. In contrast, we now assume that any unconverted waste heat is simply released without causing external harm or benefit, so there is no need to regulate or tax it. However, a subsidy $\sigma$ on the price $p_{e_g}$ of the secondary material is introduced to incentivize the recovery of waste heat. This subsidy could be provided by the government or arise from internal initiatives, such as corporate social responsibility (CSR) programs. The data center's profit maximization problem can be formulated as:
\begin{equation}
    \max\; \pi(y, a) = p_y y + [p_{e_g}+\sigma] e_g(y,a) - C(w,y,a).
\end{equation}

The first-order conditions for profit maximization are:
\begin{align}
    \frac{\partial \pi}{\partial y} &= p_y + [p_{e_g}+\sigma] \frac{\partial e_g}{\partial y} - \frac{\partial C}{\partial y} = 0, \label{eq:foc_y} \\[.5em]
    \frac{\partial \pi}{\partial a} &= [p_{e_g}+\sigma] \frac{\partial e_g}{\partial a} - \frac{\partial C}{\partial a} = 0. \label{eq:foc_g}
\end{align}
Note that dividing $\frac{\partial C}{\partial a}$ by $\frac{\partial e_g}{\partial a}$ gives the marginal cost of recovering an additional unit of waste heat ($\text{MC}_{e_g}$). Thus, we can rewrite Eq. \eqref{eq:foc_g} as:
\begin{equation}\label{eq:foc_g_2}
    p_{e_g} + \sigma = \frac{\partial C / \partial a}{\partial e_g / \partial a} = \text{MC}_{e_g}.
\end{equation}

Aside from the inclusion of $\sigma$, the only difference to the LZW analysis is that, in this case, a market for $e_g$ does not exist: there is only one potential buyer, the district heat firm. Therefore, it becomes essential to explicitly model the demand for $e_g$.  

\subsection{District heating firm's problem}
Assume the district heating firm is a local monopoly facing a downward-sloping inverse demand for heating. The firm can either buy heat $e_g$ from the data center or produce heat $h$ itself, generating emissions $e_h$ that are subject to emissions tax $\tau$. The district heating firm's profit maximization problem is:
\begin{equation}
    \max\; \pi(e_h,e_g) = p_h (h(e_h) + e_g) - p_{e_g} e_g - C(h, e_h) - \tau e_h,
\end{equation}
where $p_h$ is the price of heating sold to end-users, which is a downward-sloping function of the total heat supplied $h(e_h) + e_g$, $C(h, e_h)$ is the cost of own heating production, and $\tau e_h$ is the emissions tax paid by the firm.

The first-order conditions for profit maximization are
\begin{align}
   \frac{\partial \pi}{\partial e_h} &= \frac{\partial p_h}{\partial h} \frac{\partial h}{\partial e_h} + p_h \frac{\partial h}{\partial e_h} - \frac{\partial C}{\partial e_h} - \tau = 0, \label{eq:foc_e_h} \\[.5em]
   \frac{\partial \pi}{\partial e_g} &= \frac{\partial p_h}{\partial e_g} + p_h - p_{e_g} = 0. \label{eq:foc_e_g}
\end{align}

Eq. \eqref{eq:foc_e_h} states that a profit-maximizing district heating firm chooses the level of its own heat production such that the full marginal revenue MR$_h$, i.e., the direct marginal revenue from producing additional heat internally ($p_h \frac{\partial h}{\partial e_h}$) and the indirect price effect ($\frac{\partial p_h}{\partial h} \frac{\partial h}{\partial e_h}$), equals the total marginal cost, i.e., the marginal cost of producing additional heat through emissions $e_h$ ($\text{MC}_h=\frac{\partial C}{\partial e_h}$) and the emissions tax $\tau$. Hence, Eq. \eqref{eq:foc_e_h} can be rewritten as 
\begin{equation}\label{eq:MR1}
    \text{MR}_h=\text{MC}_h+\tau.
\end{equation}

Eq. \eqref{eq:foc_e_g} states that the district heating firm purchases heat from the data center until the full marginal revenue from selling that heat MR$_{e_g}$, i.e., the direct marginal revenue from selling additional purchased heat ($p_h$) and the indirect price effect ($\frac{\partial p_h}{\partial e_g}$), equals the marginal cost of purchasing it ($p_{e_g}$). Hence, Eq. \eqref{eq:foc_e_g} can be rewritten as 
\begin{equation}\label{eq:MR2}
    \text{MR}_{e_g}=p_{e_g}.
\end{equation}

Further, since heating from $h$ and $e_g$ are perfect substitutes, we have
\begin{equation}\label{eq:MR3}
    \text{MR}_h = \text{MR}_{e_g}.
\end{equation}
Taking Eqs. (\ref{eq:MR1}--\ref{eq:MR3}) together, this implies
\begin{equation}
    MC_h = p_{e_g} - \tau. \label{eq:heat_price}
\end{equation}
Therefore, for a district heating firm facing a downward-sloping inverse demand for heat, the firm scales its own heat production $h$ such that the marginal cost of $h$ equals the price $p_{e_g}$ of purchased heat $e_g$ minus the emissions tax $\tau$.

\subsection{Equilibrium and bargaining solutions}
Consider first the social optimum.
\begin{theorem}\label{theo5}
    Assuming well-behaved demand and cost functions, the social planner's optimum must satisfy:
    \[p_{e_g}=\text{MC}_{e_g}-\sigma^*=\text{MR}_h=MC_{h}+\tau^*.\]
\end{theorem}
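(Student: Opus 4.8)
The plan is to characterize the first-best allocation from a social welfare function and then show that the decentralized equilibrium of the two firms reproduces it exactly when the subsidy and tax take their optimal values $\sigma^*$ and $\tau^*$, at which point the stated chain of equalities holds. First I would write social welfare as gross consumer surplus from heat, $B\bigl(h(e_h)+e_g\bigr)=\int_0^{h(e_h)+e_g}p_h(q)\,dq$, minus the real resource costs $C(w,y,a)$ and $C(h,e_h)$ of the two firms, minus the environmental damage $D(e_h)$ from the district heating firm's own emissions. The key observation is that the inter-firm payment $p_{e_g}e_g$, the subsidy $\sigma e_g$, and the tax receipts $\tau e_h$ are pure transfers that cancel in the social aggregate, so the planner's objective depends only on the physical allocation, with $e_g=e_g(y,a)$.

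Next I would take the planner's first-order conditions over the allocation. Optimizing the recovered-heat margin yields $p_h=\text{MC}_{e_g}$, and optimizing own production yields $p_h\frac{\partial h}{\partial e_h}=\text{MC}_h+D'(e_h)$: in each case the marginal willingness to pay for heat equals the true marginal social cost of supplying it. These two equations are the efficiency benchmark against which the market outcome is measured.

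I would then assemble the private equilibrium already derived in the section. The data center's condition \eqref{eq:foc_g_2} rearranges to $p_{e_g}=\text{MC}_{e_g}-\sigma$, while the district heating firm's conditions \eqref{eq:MR1}, \eqref{eq:MR2} and \eqref{eq:MR3} chain together as $p_{e_g}=\text{MR}_{e_g}=\text{MR}_h=\text{MC}_h+\tau$. Concatenating gives $p_{e_g}=\text{MC}_{e_g}-\sigma=\text{MR}_h=\text{MC}_h+\tau$ for arbitrary instruments. The remaining step is to pin down the values that align this chain with the efficiency benchmark: a Pigouvian tax $\tau^*$ tied to the marginal damage $D'(e_h)$ corrects the emissions externality, and a subsidy $\sigma^*$ equal to the marginal-revenue wedge $p_h-\text{MR}_h$ offsets the district heating firm's market power so that recovery is carried to the socially efficient volume. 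Substituting $(\sigma^*,\tau^*)$ makes the planner's conditions and the equilibrium chain coincide, which is the claimed identity.

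The hard part will be this final reconciliation, because the buyer is a monopoly on a downward-sloping demand and therefore equates marginal revenue, not price, to marginal cost, so $\text{MR}_h<p_h$ and the uncorrected market underprovides heat from both sources. I would need to verify that the two instruments carry enough degrees of freedom to neutralize both the market-power wedge and the externality simultaneously — checking in particular that the same $\sigma^*,\tau^*$ restore both the efficient total volume and the efficient split between $h$ and $e_g$ — and, if a genuine first best is unattainable with only these instruments, to state the result as the constrained (second-best) optimum consistent with the monopoly structure. Confirming the second-order conditions under the ``well-behaved demand and cost functions'' hypothesis and the signs $\sigma^*,\tau^*>0$ would complete the argument.
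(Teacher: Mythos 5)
Your proposal is correct in spirit but takes a genuinely different --- and substantially more demanding --- route than the paper. The paper's proof is exactly your third step and nothing more: it concatenates the data center's condition $p_{e_g}=\text{MC}_{e_g}-\sigma$ from Eq.~\eqref{eq:foc_g_2} with the district heating firm's conditions $\text{MR}_h=\text{MC}_h+\tau$, $\text{MR}_{e_g}=p_{e_g}$, and $\text{MR}_h=\text{MR}_{e_g}$ from Eqs.~(\ref{eq:MR1}--\ref{eq:MR3}), writes stars on $\sigma$ and $\tau$, and declares the result the social optimum condition. It never sets up a welfare function, never derives planner first-order conditions, and never characterizes what values $\sigma^*$ and $\tau^*$ must take. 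Your observation that the chain holds at the private equilibrium for \emph{arbitrary} instruments is therefore sharp: it shows the paper's argument by itself does not distinguish the optimum from any other equilibrium, which is precisely the gap your plan fills. Your route --- writing welfare as gross surplus $\int_0^{h+e_g}p_h(q)\,dq$ minus real costs minus damages, noting that $p_{e_g}e_g$, $\sigma e_g$, and $\tau e_h$ are transfers, and matching planner conditions $p_h=\text{MC}_{e_g}$ and $p_h\frac{\partial h}{\partial e_h}=\text{MC}_h+D'(e_h)$ against the equilibrium chain --- actually delivers the content the theorem claims. The reconciliation you flag as the hard part does go through under the ``well-behaved'' hypothesis: the subsidy enters both of the data center's margins only through the effective price $p_{e_g}+\sigma$, so the single condition $\sigma^*=p_h-\text{MR}_{e_g}$ aligns both, and the tax then needs $\tau^*=D'(e_h)-\bigl(p_h\frac{\partial h}{\partial e_h}-\text{MR}_h\bigr)$, i.e., marginal damage net of the market-power wedge (the classic result that the optimal emissions tax on a firm with market power falls short of marginal damage). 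Two instruments, two margins --- the first best is implementable, so your second-best fallback is unnecessary. What your approach buys is an actual characterization of $(\sigma^*,\tau^*)$ and a genuine optimality argument; what it costs is an explicit damage function $D(e_h)$, which the paper's two-firm section deliberately avoids by assuming unconverted waste heat is harmless and leaving the damage from $e_h$ implicit.
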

\begin{proof}
    See Appendix \ref{proof:theo5}.
\end{proof}

In the real world, the price $p_{e_g}$ is determined through private bargaining between the data center and district heat firm, subject to incomplete information about each other's cost and demand functions. Moreover, it is not self-evident that the demand and cost functions for a secondary material are well-behaved. For these reasons, corner solutions are possible. 

If the bargaining solution satisfies:
\begin{equation}
    MC_{h}+\tau > p_{e_g} \geq MC_{e_g}-\sigma,
\end{equation}
then the district heat firm will purchase all $e_g$ that the data center can supply. In this case, the data center has an incentive to increase its efforts to recover more waste heat. However, the production of primary output $y$ sets an upper bound of how much heat will be generated.

If, on the other hand, the price $p_{e_g}$ is prohibitively high, that is,
\begin{equation}
    p_{e_g} > MC_{h}+\tau,
\end{equation}
then the district heat firm will find it more profitable to generate heat by itself rather than purchase the residual heat, and hence $e_g = 0$. This situation can occur if the marginal cost of the waste heat recovery is very high, even after accounting for the subsidy $\sigma$, or if the data center attempts to extract monopoly rents by overpricing the residual heat.   

To fully understand the dynamics between the data center and the district heating firm, it is important to evaluate the distinct roles played by the subsidy $\sigma$ and the emissions tax $\tau$. For the data center, the subsidy $\sigma$ reduces the effective marginal cost of waste heat recovery. If $\sigma$ is large enough to fully offset the marginal heat recovery cost $MC_{e_g}$, the data center may effectively ``donate" the residual heat. For the district heating firm, the emissions tax $\tau$ raises the effective marginal cost of internal heat production, incentivizing the firm to purchase $e_g$ from the data center, provided $p_{e_g}<\text{MC}_h+\tau$.

The interplay between $\sigma$ and $\tau$ is critical in aligning the incentives of the data center and the district heating firm to achieve the social planner's optimum. A low $\sigma$ may fail to incentivize sufficient waste heat recovery, while a too generous $\sigma$ risks inefficiencies, such as unnecessary recovery beyond demand. Similarly, a low $\tau$ may not sufficiently discourage self-production, whereas an excessively high tax could distort market behavior. Therefore, $\sigma$ and $\tau$ must be carefully calibrated to avoid corner solutions and ensure that the equilibrium price of residual heat satisfies the socially optimal condition stated in Theorem \ref{theo5}.

\section{Concluding remarks}\label{sec:concl}
This paper revisits the marginal abatement cost framework to evaluate the role of secondary material prices in achieving efficient emission reductions. While Leary, Zunino, and Wagner (\citeyear{leary_marginal_2025}) highlight the potential of {establishing prices for secondary materials} to align economic incentives with environmental goals, our analysis demonstrates their limitations as standalone mechanisms. Specifically, we show that the socially efficient emission targets decline as secondary materials prices increase, and that secondary materials prices alone cannot achieve optimal outcomes without additional policy measures.

A key contribution of our paper is that we demonstrate the necessity of positive Pigouvian emissions taxes unless secondary materials prices are immediately at such a high level that encourages firms to abate the bad emissions completely. This seems rather unrealistic considering the real-world cases: for example, Google donates for free the heat generated by its data center in Hamina, Finland, for the district heating. 

Another key contribution of this paper is that we propose a socially optimal budget-neutral policy that combines subsidies for secondary materials prices with taxes on uncontrolled emissions. By ensuring subsidies and taxes are proportionally adjusted to the scale of emission abatement required, the policy provides a robust and flexible pathway to achieving social efficiency while maintaining budget neutrality. It also aligns with circular economy principles by encouraging the productive reuse of emissions. 

The third contribution of this paper is that we extend the analysis to a monopsony buyer of the secondary material where a data center supplies residual heat to a district heating firm operating as a local monopoly. By explicitly modeling demand for residual heat, the two-firm extension demonstrates how subsidies and emissions taxes can align firm incentives with the social optimum in the naturally monopolistic setting. This extension provides actionable insights for addressing market imperfections and designing effective environmental policy instruments.

This paper emphasizes the importance of integrating market-based incentives with regulatory interventions to achieve both environmental and economic objectives. It opens fascinating revenues for future research. Empirical applications of our proposed framework could be conducted across industries where markets for secondary materials already exist or hold significant potential. The interplay between the development in the recovery of secondary materials and environmental regulations, particularly in non-competitive settings, is also worth further exploration. In addition, future research could incorporate our theoretical framework into the MAC estimation approach proposed by \textcite{Kuosmanen2021} to generate more practical and reliable MAC estimates.



\begin{appendices}
\baselineskip 20pt
\renewcommand{\thesubsection}{A.\arabic{subsection}}
\setcounter{table}{0}
\setcounter{equation}{0}
\setcounter{theorem}{0}
\renewcommand{\theequation}{A\arabic{equation}} 
\renewcommand{\thetable}{B\arabic{table}} 

\section{Proofs}\label{app:proof}
\subsection{Proof of Theorem \ref{theo1}}\label{proof:theo1}
The socially efficient emission target $e_s^*$ is determined by equating an assumed marginal damage (MD) function and the $\text{MAC}_S$ function:
\begin{equation}
    \text{MD} = \text{MAC}_S,
\end{equation}
or equivalently,
\begin{equation}\label{eq:equality}
    \text{MD} = \text{MAC}_M - p_{e_g}.
\end{equation}
We can solve for $e_s^*$ as a function of $p_{e_g}$:
\begin{equation}
    e_s^* = f(p_{e_g}),
\end{equation}
where $f$ is an implicit function defined by the equality of MD and MAC$_S$. Note that since LZW assume a price-taking firm, $p_{e_g}$ is exogenous for the firm. 

To show that $e_s^*$ is a decreasing function of $p_{e_g}$, let us differentiate both sides of Eq. \eqref{eq:equality} with respect to $p_{e_g}$:
\begin{equation}\label{eq:diff}
    \frac{\partial \text{MD}(e_s^*)}{\partial e_s^*} \frac{\partial e_s^*}{\partial p_{e_g}} = \frac{\partial \text{MAC}_M(e_s^*)}{\partial e_s^*} \frac{\partial e_s^*}{\partial p_{e_g}} -1.
\end{equation}
Rearranging Eq. \eqref{eq:diff} leads to:
\begin{equation}
    \frac{\partial e_s^*}{\partial p_{e_g}} = -\frac{1}{\frac{\partial \text{MD}(e_s^*)}{\partial e_s^*} - \frac{\partial \text{MAC}_M(e_s^*)}{\partial e_s^*}}.
\end{equation}
As noted in LZW, the MD (MAC$_M$) function is an increasing (decreasing) function of emissions, hence we have $\frac{\partial \text{MD}(e_s^*)}{\partial e_s^*} > 0$ and $\frac{\partial \text{MAC}_M(e_s^*)}{\partial e_s^*} <0$. It follows that:
\begin{equation}
    \frac{\partial e_s^*}{\partial p_{e_g}} < 0,
\end{equation}
which implies that $e_s^*$ is a decreasing function of $p_{e_g}$. \hfill \Halmos

\subsection{Proof of Theorem \ref{theo2}}\label{proof:theo2}
As shown in Eq. \eqref{eq:equality}, the socially efficient level of bad emissions $e_s^*$ is defined by the point at which the MD function equals the MAC$_S$ function, which can be rewritten as:
\begin{equation}\label{eq:equality2}
    \text{MD}(e_s^*) = \text{MAC}_M(e_s^*) - p_{e_g}.
\end{equation}

In the absence of an emissions tax, the firm chooses its optimal level of bad emissions $\hat e_b^0$ such that $\text{MAC}_s=0$, or equivalently,
\begin{equation}\label{eq:equality3}
     \text{MAC}_M(\hat e_b^0)=p_{e_g}.
\end{equation}
Inserting Eq. \eqref{eq:equality3} to Eq. \eqref{eq:equality2} gives:
\begin{equation}
     \text{MD}(e_s^*) = \text{MAC}_M(e_s^*) - \text{MAC}_M(\hat e_b^0).
\end{equation}

Theorem \ref{theo1} establishes that $e_s^*$ is a decreasing function of $p_{e_g}$. Thus, as $p_{e_g}$ increases, $e_s^*$ will decrease correspondingly, eventually reaching zero if $p_{e_g}$ is sufficiently large. At this point, we have:
\begin{equation}
     \text{MD}(e_s^*) = \text{MAC}_M(e_s^*) - \text{MAC}_M(\hat e_b^0) = 0,
\end{equation}
which implies that $\hat e_b^0=e_s^*=0$. In this case, an emissions tax is not needed. 

However, contrary to the extreme case, the secondary material price $p_{e_g}$ is usually not sufficient to drive $e_s^*$ to zero, so we typically have $e_s^*>0$. Hence,
\begin{equation}
     \text{MD}(e_s^*) = \text{MAC}_M(e_s^*) - \text{MAC}_M(\hat e_b^0) > 0,
\end{equation}
which implies that $\hat e_b^0 > e_s^*$. Therefore, a tax $\tau^*$ is needed to complement the secondary material price to bring the firm's optimal choice of bad emissions $\hat e_b^0$ to the socially efficient level $e_s^*$:
\begin{equation}\label{eq:optimal-tax}
    \tau^* = \text{MD}(e_s^*) > 0.
\end{equation} \hfill \Halmos

\subsection{Proof of Theorem \ref{theo3}}\label{proof:theo3}
To show that the socially optimal emissions tax $\tau^*$, in the typical case where $\tau^*>0$, is a decreasing function of $p_{e_g}$, let us differentiate both sides of Eq. \eqref{eq:optimal-tax} with respect to $p_{e_g}$:
\begin{equation}\label{eq:diff2}
    \frac{\partial \tau^*}{\partial p_{e_g}} = \frac{\partial \text{MD}(e_s^*)}{\partial e_s^*} \frac{\partial e_s^*}{\partial p_{e_g}}.
\end{equation}
Since we have $\frac{\partial \text{MD}(e_s^*)}{\partial e_s^*} > 0$ and $\frac{\partial e_s^*}{\partial p_{e_g}} <0$, it follows that:
\begin{equation}
    \frac{\partial \tau^*}{\partial p_{e_g}} < 0,
\end{equation}
which implies that $\tau^*$ is a decreasing function of $p_{e_g}$. \hfill \Halmos

\subsection{Proof of Theorem \ref{theo4}}\label{proof:theo4}
When the government subsidizes the secondary material price $p_{e_g}$, the socially efficient emission target $e_s^*$ is determined by:
\begin{equation}
    \text{MD}(e_s^*) = \text{MAC}_M(e_s^*) - (p_{e_g}+\sigma^*),
\end{equation}
where $\sigma^*>0$ represents the socially optimal subsidy on the secondary material price $p_{e_g}$. That is, the firm receives an effective price of $p_{e_g}+\sigma^*$ for each unit of secondary material sale. In the absence of an emissions tax, the firm chooses its optimal level of bad emissions $\hat e_b^0$ such that:
\begin{equation}
     \text{MAC}_M(\hat e_b^0)=p_{e_g} + \sigma^*.
\end{equation}

Following Theorem \ref{theo2}, unless the subsidized price $p_{e_g}+\sigma^*$ is sufficiently large to drive the socially efficient emission target $e_s^*$ to zero (a scenario that is practically unrealistic), a positive tax $\tau^* > 0$ is always needed to complement the subsidized price to bring the firm's privately optimal choice of bad emissions $\hat e_b^0$ to $e_s^*$. Thus, under the socially optimal condition, both $\tau^*$ and $\sigma^*$ are typically positive.

To ensure that the socially optimal policy is budget-neutral, the revenue generated from the emissions tax must equal the cost of the subsidy:
\begin{equation}\label{eq:budget-neutral}
    \tau^* \cdot e_s^* = \sigma^* \cdot (e_b - e_s^*),
\end{equation}
where $e_b$ represents the firm's initial emission level before the implementation of the subsidy and emissions tax. Rearranging Eq. \eqref{eq:budget-neutral} yields the following relationship:
\begin{equation}\label{eq:budget-neutral2}
    \frac{\tau^*}{\sigma^*} = \frac{e_b - e_s^*}{e_s^*},
\end{equation}
which implies that budget neutrality can be achieved if $\tau^*$ and $\sigma^*$ are proportional to the amount of emissions abated relative to the socially efficient emission target $e_s^*$.

Therefore, by setting $\tau^* > 0$ and $\sigma^* > 0$ to satisfy Eq. \eqref{eq:budget-neutral2}, the government achieves the socially optimal emission level $e_s^*$, where the subsidy $\sigma^*$ for the secondary material price $p_{e_g}$ and the tax $\tau^*$ on bad emissions ensure budget neutrality with no net revenue or cost for the government. \hfill \Halmos

\subsection{Proof of Theorem \ref{theo5}}\label{proof:theo5}
Combining the first-order conditions (and their rewritten forms) of the two firms, i.e., Eqs. (\ref{eq:foc_y}--\ref{eq:foc_g_2}) for the data center and Eqs. (\ref{eq:foc_e_h}--\ref{eq:MR2}) for the district heating firm, as well as the relationship established in Eqs. (\ref{eq:MR3}--\ref{eq:heat_price}), we can derive the social optimum condition as follows:
\begin{equation}\label{eq:social-optimum}
    p_{e_g}=\text{MC}_{e_g}-\sigma^*=\text{MR}_h=MC_{h}+\tau^*.
\end{equation}
\hfill \Halmos

\end{appendices}


\printbibliography


\end{document}